\def\beq{\begin{equation}}
\def\eeq{\end{equation}}
\def\dsp{\displaystyle}
\def\reals{{\bf R}}
\declaretheorem[style=plain,name=Theorem]{theorem}
\newcommand{\verbatimfont}[1]{\def\verbatim@font{#1}}%
\newcommand{\bi}{\begin{itemize}}\newcommand{\ei}{\end{itemize}}
\newcommand{\be}{\begin{equation}}\newcommand{\ee}{\end{equation}}
\newcommand{\bee}{\begin{enumerate}}\newcommand{\eee}{\end{enumerate}}
\newcommand{\bea}{\begin{eqnarray}}\newcommand{\eea}{\end{eqnarray}}
\newcommand{\beas}{\begin{eqnarray*}}\newcommand{\eeas}{\end{eqnarray*}}
\newcommand{\bc}{\begin{center}}\newcommand{\ec}{\end{center}}
\renewcommand*\env@matrix[1][*\c@MaxMatrixCols c]{%
  \hskip -\arraycolsep
  \let\@ifnextchar\new@ifnextchar
  \array{#1}}
\title{From Observability to Observer Realization\\
\Large A path via elementary block-diagram manipulations}
\author{Eder Baron-Prada, \ Renzo Caballero, \ Eric Feron}
\begin{document}
\maketitle
\CSMsetup
\linenumbers \modulolinenumbers[2] 

Introductory state-space linear control courses focus on linear, time-invariant systems and spend intense efforts by introducing system realizations that allow the student to grasp fundamental concepts, among which controllability, observability, and controller and observer design. Assuming that the system is single-output, observability realizations allow the student to easily perceive the limitations of the observed output in terms of what it can estimate about the state of the system. Once observability is established, the observer realization provides a rich framework allowing the educator to establish the theorem stating the equivalence between observability and the ability to choose arbitrary pole locations for the corresponding state estimation error dynamics.  The pole location is achieved by introducing a linear gain which leads to a trivial observer pole placement process. While the observer realizations are trivial to obtain from systems expressed as transfer functions, such is not the case when starting from arbitrary observable state-space representations and elementary but laborious matrix manipulations transform the observability realization of a given system into its observer realization. 
This note describes a graphical mechanism to transform a system expressed in observability form into its observer form based on elementary block diagram manipulations, a technique whose pedagogical usefulness tends to be overlooked\cite{richard2008modern}. Given the well known duality principle in linear systems \cite{bacciotti2019stability}[Section 5.2.4], \cite{williams2007linear}[Section 4.3] with respect to controllability and observability, the proposed graphical mechanism is applicable to transform from controllability to controller realization. 



An extensive bibliographic search reveals references that briefly elaborate on observability and observer properties of the systems \cite{1971iii,brockett2015finite,heij2006introduction,hespanha2018linear,curfoin1978infinite,farina2011positive,bosgra2001design}, however these references do not explore these concepts in depth. Additional references  introduce the observability matrix from the original system which is essential to construct the observability realization \cite{trentelman2012control,Kalpana2020,antsaklis2006linear,sinha2007linear,gu2012discrete,o1983observers,ControFriedland,williams2007linear,zadeh2008linear,chen1999linear,bourles2013linear,hendricks2008linear,zabczyk2020mathematical,ogata2010modern,brogan1991modern,isidori2013nonlinear,bacciotti2019stability,delchamps2012state}. Finally, other references describe the block diagrams for the observability/observer and controllability/controllable canonical realizations \cite{basile1992controlled,FPE:86,zadeh2008linear,Kai:80,szidarovszky2018linear}, however they rely on the transformation theorem to reach the observer realization from the observability form. The same search indicates that the graphical approach, introduced here, to perform the state-space transformation from observability realization to observer realization, is new. Further investigations about the state-space transformation corresponding to these block manipulations also reveal a close connection between the computation of these state-space transformations and recent results on generalized Fibonacci polynomials.







\section{Preliminaries}

We consider the single-input, single-output system
\beq
\begin{array}{rcl}
\dsp\frac{\dif}{\dif t}\bm{x} & = & \bm{Ax}+\bm{B}u, \; \bm{x}(0)=\bm{x}_0, \\[10pt]
y & = & \bm{Cx}, 
\end{array}
\label{basic_system}
\eeq
where $\bm{x}=\bm{x}(t) \in \reals^{n}$ is the system's state, $u = u(t)$ is the system's input, and $y=y(t)$ is the system's output. As we are addressing the
observability characteristics of the system, for notation clarity purposes, we assume a null input to the system and therefore omit the input matrix $\bm{B}$. However, the reader must be warned that the following discussion makes extensive use of the word ``realization", which is most often understood as applying to systems with both inputs and outputs. Thus, although the input matrix $\bm{B}$ is omitted below, it is always present in the discussions below whenever the word ``realization" is used.

All considerations are relative to the system's observability properties and observer design considerations. These considerations can be transposed to the corresponding controllability and controller design considerations.
A simple criterion to establish the observability of the system~(\ref{basic_system}) is established by the classical observability theorem. 

\begin{theorem}[{See \cite{Kai:80}}]  The system~(\ref{basic_system}) is observable if and only if the square matrix
\[
{\cal O} = \left[\begin{array}{l} \bm{C}\\\bm{CA}\\\bm{CA}^2 \\\hspace{2mm} \vdots \\\bm{CA}^{n-1} \end{array} \right]
\]
is invertible.
\end{theorem}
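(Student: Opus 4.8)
The plan is to characterize observability as injectivity of the map sending an initial condition to the resulting output trajectory, and then to translate that injectivity into an algebraic condition on $\mathcal{O}$. With $u \equiv 0$, the output of~(\ref{basic_system}) is $y(t) = \bm{C}e^{\bm{A}t}\bm{x}_0$. Two initial states $\bm{x}_0$ and $\bm{x}_0'$ produce the same output on an interval precisely when $\bm{C}e^{\bm{A}t}(\bm{x}_0 - \bm{x}_0') = 0$ there; so the system is observable exactly when the only vector $\bm{z}$ with $\bm{C}e^{\bm{A}t}\bm{z} = 0$ for all $t$ is $\bm{z} = 0$. The whole proof then amounts to showing that this last condition on $\bm{z}$ is equivalent to $\mathcal{O}\bm{z} = 0$.

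First I would convert the trajectory condition into a condition on derivatives at the origin. Since $t \mapsto \bm{C}e^{\bm{A}t}\bm{z}$ is real-analytic, it vanishes on an interval if and only if all of its derivatives at $t = 0$ vanish; the $k$-th derivative at $0$ is $\bm{C}\bm{A}^k\bm{z}$. Hence $\bm{C}e^{\bm{A}t}\bm{z} \equiv 0$ is equivalent to $\bm{C}\bm{A}^k\bm{z} = 0$ for all $k \ge 0$. Next I would reduce the infinitely many conditions $k = 0, 1, 2, \dots$ to the finite list $k = 0, \dots, n-1$: by the Cayley--Hamilton theorem $\bm{A}^n$, and inductively every $\bm{A}^k$ with $k \ge n$, is a linear combination of $\bm{I}, \bm{A}, \dots, \bm{A}^{n-1}$, so $\bm{C}\bm{A}^k\bm{z} = 0$ for $k = 0, \dots, n-1$ already forces $\bm{C}\bm{A}^k\bm{z} = 0$ for all $k$. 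The conjunction of $\bm{C}\bm{A}^k\bm{z} = 0$ for $k = 0, \dots, n-1$ is exactly $\mathcal{O}\bm{z} = 0$, so observability is equivalent to $\ker\mathcal{O} = \{0\}$; and because $\mathcal{O}$ is a square $n \times n$ matrix, trivial kernel is equivalent to invertibility, which gives both directions of the stated equivalence at once.

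The main obstacle — really the only subtle point — is justifying the passage from ``$\bm{C}e^{\bm{A}t}\bm{z}$ vanishes for all $t$'' to ``all derivatives of $\bm{C}e^{\bm{A}t}\bm{z}$ at $0$ vanish.'' This rests on analyticity of the matrix exponential; an equivalent route is to note that the entries of $\bm{C}e^{\bm{A}t}\bm{z}$ are finite linear combinations of terms $t^{j}e^{\lambda t}$ (with $\lambda$ an eigenvalue of $\bm{A}$), and such a function is identically zero only if all coefficients are zero, which again amounts to the derivative conditions at $0$. Everything else is routine linear algebra: Cayley--Hamilton for the finite truncation, and the standard equivalence of injectivity, trivial kernel, and invertibility for square matrices. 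I would present the argument as a single chain of equivalences so that the ``if'' and ``only if'' parts are handled simultaneously.
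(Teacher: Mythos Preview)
Your argument is correct and is the standard textbook proof: reduce observability to injectivity of $\bm{x}_0 \mapsto \bm{C}e^{\bm{A}t}\bm{x}_0$, use analyticity to pass to derivatives at $t=0$, invoke Cayley--Hamilton to truncate to $k=0,\dots,n-1$, and finish with the square-matrix equivalence of trivial kernel and invertibility. There is nothing to compare against, however, because the paper does not prove this theorem at all: it is stated with a citation to Kailath~\cite{Kai:80} and used as a classical fact, with no accompanying proof environment. Your write-up would stand in for what the paper omits, and it matches the argument one finds in the cited reference.
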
 \label{t1}
The observability theorem exposed relates the linear independency of the $n$ rows of $\mathcal{O}$, with the ability to estimate each one of the $n$ states of the system \eqref{basic_system}. The observability matrix, $\mathcal{O}$, is used to extract the {\em observability realization} of the observable system~(\ref{basic_system}).
\begin{theorem}
Consider the basis formed by the vectors $\bm{C}^\top, \; \bm{A}^\top \bm{C}^\top, \; \ldots , {\bm{A}^{n-1}}^\top \bm{C}^\top$. In that basis, the system~(\ref{basic_system}) is realized with the companion form
\beq
\begin{array}{rcl}
\dsp \frac{\dif}{\dif t} \tilde{\bm{x}} & = & 
\left[\begin{array}{ccccc} 0& 1 & 0 &  
 \ldots & 0 \\
0 & 0 & 1 & \ddots & \vdots \\
\vdots & & \ddots & \ddots & \vdots \\
0 & 0 & \ldots & 0 & 1 \\
-a_0 & -a_1 & \ldots & -a_{n-2} & -a_{n-1}
\end{array} \right]\tilde{\bm{x}}\\ [40pt]
& = & \bm{A}_{\rm observability} \tilde{\bm{x}}\\[10pt]
y & = & \left[ 1 \; 0 \; \ldots \; 0 \; 0\right]\tilde{\bm{x}}\\[10pt]
& = & \bm{C}_{\rm observability} \tilde{\bm{x}},
\end{array}
\label{observability_1}
\eeq
with ${\bm{A}}_{\rm observability} = {\cal O}\bm{A}{\cal O}^{-1}$, ${\bm{C}}_{\rm observability} = \bm{C}{\cal O}$, and $a_0 , \dots ,\;a_{n-1}$ are the coefficients of the characteristic polynomial $\lambda(s)$ of $\bm{A}$
\begin{equation}
   \lambda(s)=s^n+a_{n-1}s^{n-1}+\dots+a_1s+a_0.
   \label{eqn:poli}
\end{equation}
\end{theorem}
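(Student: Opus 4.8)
The plan is to make the coordinate change completely explicit and then reduce the statement to two linear-algebra facts about $\mathcal{O}$, the nontrivial one being the Cayley--Hamilton theorem. The very hypothesis that the $n$ vectors $\bm{C}^\top, \bm{A}^\top\bm{C}^\top, \ldots, (\bm{A}^{n-1})^\top\bm{C}^\top$ form a basis is, by the preceding theorem, the observability of \eqref{basic_system}; equivalently, these vectors are the transposed rows of $\mathcal{O}$ and $\mathcal{O}$ is invertible. Concretely I would take $\tilde{\bm{x}} = \mathcal{O}\bm{x}$ as the new state, whose components are the output $y=\bm{C}\bm{x}$ together with its successive time-derivatives $\bm{C}\bm{A}\bm{x}, \ldots, \bm{C}\bm{A}^{n-1}\bm{x}$ along trajectories of \eqref{basic_system}. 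Under this similarity the realization becomes $\dot{\tilde{\bm{x}}} = (\mathcal{O}\bm{A}\mathcal{O}^{-1})\tilde{\bm{x}}$ and $y = (\bm{C}\mathcal{O}^{-1})\tilde{\bm{x}}$, so $\bm{A}_{\mathrm{observability}} = \mathcal{O}\bm{A}\mathcal{O}^{-1}$ and $\bm{C}_{\mathrm{observability}} = \bm{C}\mathcal{O}^{-1}$; it then remains only to recognize the right-hand sides as the displayed companion and selector matrices.

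For the output matrix I would observe that $\bm{C}\mathcal{O}^{-1} = [1\ 0\ \cdots\ 0]$ is equivalent, after multiplying on the right by $\mathcal{O}$, to the tautology that $\bm{C}$ is the first row of $\mathcal{O}$. For the dynamics matrix, rather than inverting $\mathcal{O}$ I would verify the equivalent identity $\mathcal{O}\bm{A} = \bm{A}_{\mathrm{observability}}\,\mathcal{O}$ row by row. The rows of $\mathcal{O}\bm{A}$ are $\bm{C}\bm{A}, \bm{C}\bm{A}^2, \ldots, \bm{C}\bm{A}^{n}$. On the other side, rows $1$ through $n-1$ of the companion matrix $\bm{A}_{\mathrm{observability}}$ are the shift covectors $\bm{e}_2^\top, \ldots, \bm{e}_n^\top$, so pre-multiplying $\mathcal{O}$ by them merely reproduces rows $2$ through $n$ of $\mathcal{O}$, namely $\bm{C}\bm{A}, \ldots, \bm{C}\bm{A}^{n-1}$, which matches the first $n-1$ rows of $\mathcal{O}\bm{A}$. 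The bottom row of $\bm{A}_{\mathrm{observability}}$ is $[-a_0\ -a_1\ \cdots\ -a_{n-1}]$, whose product with $\mathcal{O}$ is $-\sum_{k=0}^{n-1} a_k\,\bm{C}\bm{A}^{k} = \bm{C}\bigl(-\sum_{k=0}^{n-1} a_k\bm{A}^{k}\bigr)$.

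The one genuinely non-mechanical ingredient is closing this last row: one must invoke the Cayley--Hamilton theorem, which together with the characteristic polynomial \eqref{eqn:poli} gives $-\sum_{k=0}^{n-1} a_k\bm{A}^{k} = \bm{A}^{n}$, so the bottom row of $\bm{A}_{\mathrm{observability}}\mathcal{O}$ equals $\bm{C}\bm{A}^{n}$ and matches the bottom row of $\mathcal{O}\bm{A}$. Multiplying $\mathcal{O}\bm{A} = \bm{A}_{\mathrm{observability}}\mathcal{O}$ on the right by $\mathcal{O}^{-1}$ then yields $\bm{A}_{\mathrm{observability}} = \mathcal{O}\bm{A}\mathcal{O}^{-1}$ in exactly the displayed companion form, which completes the proof. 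I do not anticipate any deeper obstacle --- apart from the appeal to Cayley--Hamilton the argument is bookkeeping --- the only point demanding care being the consistent placement of $\mathcal{O}$ versus $\mathcal{O}^{-1}$ in the coordinate change, which, as the selector form $[1\ 0\ \cdots\ 0]$ forces, makes the observability output matrix $\bm{C}\mathcal{O}^{-1}$.
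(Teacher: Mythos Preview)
Your argument is correct: the change of variables $\tilde{\bm{x}}=\mathcal{O}\bm{x}$, the tautological identification of $\bm{C}$ with the first row of $\mathcal{O}$, and the row-by-row verification of $\mathcal{O}\bm{A}=\bm{A}_{\mathrm{observability}}\,\mathcal{O}$ closed by Cayley--Hamilton give exactly the companion form. There is nothing to compare against, since the paper does not prove this theorem itself but simply defers to standard references (Bacciotti, Kailath, Antsaklis--Michel); your write-up is in fact the textbook argument those references contain. You also correctly flag that the output map must be $\bm{C}_{\mathrm{observability}}=\bm{C}\mathcal{O}^{-1}$ rather than the $\bm{C}\mathcal{O}$ printed in the statement, since $\bm{C}=\bm{e}_1^\top\mathcal{O}$ forces $\bm{C}\mathcal{O}^{-1}=\bm{e}_1^\top$ while $\bm{C}\mathcal{O}$ is generically not a selector row.
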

\begin{proof}
{Proof of Theorem 2 is well know in the literature. See for instance \cite{bacciotti2019stability}[Section 5.2.2], \cite{Kai:80}[Section 6.2] and \cite{antsaklis2006linear} [Section 3.2].}
\end{proof}
While the observability realization~(\ref{observability_1}) and surrounding analyses provide valuable and easy insight into the observability of~(\ref{basic_system}), it fails to provide an immediately constructive mechanism to specify observers in terms of specifying arbitrary poles of observer error dynamics. A realization of~(\ref{basic_system}) that supports this purpose is the observer realization
\beq
\begin{array}{rcl}
\dsp \frac{\dif}{\dif t} \bar{\bm{x}} & = & 
\left[\begin{array}{ccccc} 
-a_{n-1}& 1 & 0 &  \ldots & 0 \\
-a_{n-2} & 0 & 1 & \ddots & \vdots \\
\vdots & & \ddots & \ddots & \vdots \\
-a_1 & 0 & \ldots & 0 & 1 \\
-a_0 & 0 & 0& \ldots & 0
\end{array} \right]\bar{\bm{x}}\\ [40pt]
& = & \bm{A}_{\rm observer} \bar{\bm{x}}\\[10pt]
y & = & \left[ 1 \; 0 \; \ldots \; 0 \; 0\right]\bar{\bm{x}}\\[10pt]
& = & \bm{C}_{\rm observer} \bar{\bm{x}}.
\end{array} 
\label{observer}
\eeq
This realization is convenient for the reader to understand the flexibility offered by linear observers of the form
\[
\begin{array}{rcl}
\dsp \frac{\dif}{\dif t}\hat{\bm{x}} & = & \bm{A}\hat{\bm{x}} + \bm{L}(y - \bm{C}\hat{\bm{x}})\\
&=& (\bm{A}-\bm{LC})\hat{\bm{x}}+ \bm{L}y.
\end{array}
\]
This kind of observer is known as Luenberger observer, where $\bm{L} \in \reals^{n}$ is the observer gain.
Indeed, in the transformed state-space that defines $\bar{x}$, 
the observer becomes
\[
\begin{array}{rcl}
\dsp \frac{\dif}{\dif t}\hat{\bar{\bm{x}}} & = & \bm{A}_{\rm observer}\hat{\bar{\bm{x}}} + \bm{L}_{\rm observer}(y - \bm{C}_{\rm observer}\hat{\bar{\bm{x}}})\\[10pt]
&=& (\bm{A}_{\rm observer}-\bm{L}_{\rm observer}\bm{C}_{\rm observer})\hat{\bar{\bm{x}}}+ \bm{L}_{\rm observer}y\\[10pt]
&=& \left[\begin{array}{ccccc} 
-a_{n-1}-l_{n-1}& 1 & 0 &  \ldots & 0 \\
-a_{n-2}-l_{n-2} & 0 & 1 & \ddots & \vdots \\
\vdots & & \ddots & \ddots & \vdots \\
-a_1-l_1 & 0 & \ldots & 0 & 1 \\
-a_0 -l_0& 0 & 0& \ldots & 0
\end{array} \right]\hat{\bar{\bm{x}}} + \left[\begin{array}{l} l_{n-1} \\ l_{n-2} \\ \hspace{1mm}\vdots \\ l_1 \\ l_0 \end{array} \right]y.
\end{array}
\]
This realization clearly shows how the observer gains can be used to modify the characteristic polynomial of the estimation error dynamics arbitrarily since each gain directly and simply impacts each corresponding characteristic polynomial coefficient. 
The derivation of the observer realization~(\ref{observer}) of the 
system~(\ref{basic_system}) from its observability realization  (\ref{observability_1}) is, however, not immediate and existing derivations, which rely on a number of linear-algebraic constructs, can constitute an obstacle to the senior undergraduate or first-year graduate students who learns this topic. 

We have found a way to go from Observability to Observer realization in the current literature. It relies on the transfer function for the derivation of the observer realization form. This is immediate but relies on having the system's transfer function. We propose two alternative derivations of the observer realization via matrix manipulations and elementary block manipulations that follow the spirit of those found in Kailath~\cite[pp. 37-45]{Kai:80}.

\section{Going from Observability to Observer realizations via Matrix Manipulations}

Given the system \eqref{basic_system}, a linear state-space transformation is given by any invertible matrix $\bm{P}\in\reals^{n\times n}$ such that $\hat{\bar{\bm{x}}}=\bm{Px}$, resulting in the new realization

\begin{align}
\frac{\dif }{\dif t}\hat{\bar{\bm{x}}}&=\bm{PAP}^{-1}\hat{\bar{\bm{x}}}\\
y&=\bm{CP}^{-1}\hat{\bar{\bm{x}}}.
    \label{eqn:transformationstates}
\end{align}

Considering that $\bm{A}_{\rm{observer}}=\bm{PA}_{\rm{observability}}\bm{P}^{-1}$, acceptable linear state-space transformations may be found by solving for \eqref{eqn:suylvester} and \eqref{eqn:suylvester2}.

\begin{align}
    \bm{A}_{\rm{observer}}\bm{P}-\bm{PA}_{\rm{observability}}=0,\label{eqn:suylvester}\\
    \bm{C}_{\rm{observer}}\bm{P}-\bm{C}_{\rm{observability}}=0.
    \label{eqn:suylvester2}
\end{align}
The complete transformation of the realization also involves the condition $\bm{B}_{\rm observer}- \bm{PB}_{\rm observability} = 0$. However, this condition does not participate actively to the rest of the paper and is therefore omitted.

Equation \eqref{eqn:suylvester} is a homogeneous Sylvester equation in the unknown $\bm{P}$. The equivalence of $\bm{A}_{\rm{observer}}$ and $\bm{A}_{\rm{observability}}$ implies that there exists an invertible matrix $\bm{P}$ that satisfies \eqref{eqn:suylvester} and \eqref{eqn:suylvester2}. One such solution is


\begin{equation*}
\bm{P}=\left[\begin{array}{cccc}
1 & 0 & \cdots & 0 \\
a_{n-1} & 1 & \ddots & \vdots \\
\vdots & \ddots & \ddots & 0 \\
a_{0} & \cdots & a_{n-1} & 1
\end{array}\right].
    \label{eqn:P}
\end{equation*}

$\bm{P}$ is a Toeplitz matrix of dimension $n\times n$. It is remarkable, that as it was expected all the eigenvalues of the system are equal to one, given that it means that the spectrum is not changed by doing this transformation. The matrix $\bm{P}^{-1}$ is characterized by \cite{Sahin2018}

\begin{align}
\bm{P}^{-1}=\left[\begin{array}{cccc}
1 & 0 & \cdots & 0 \\
F_1 & 1 & \ddots & \vdots \\
\vdots & \ddots & \ddots & 0 \\
F_{n-1} & \cdots & F_1 & 1
\end{array}\right],
    \label{eqn:Pinv}
\end{align}
where $F_k$ is the generalized Fibonacci polynomial of order $k$ on the variables $(a_{n-k},a_{n-k+1},a_{n-k+2},\dots,a_{n-1})$. The generalized Fibonacci polynomials ($F_k$), with $k \in \{0, \ldots, n-1\}$, are defined inductively by

$$
\begin{aligned}
F_{0} &=1 \\
F_{1} &=-a_{n-1}F_0 \\
F_{2} &=-a_{n-1} F_{1}-a_{n-2} F_{0} \\
F_{3} &=-a_{n-1} F_{2}-a_{n-2} F_{1}-a_{n-3}F_0 \\
\vdots\\
F_k&=\sum_{i=1}^{k}-a_{n-i}F_{k-i}\\
\vdots\\
F_{n-1} &=-a_{n-1} F_{n-2}-a_{n-2}F_{n-3}-\cdots-a_{2} F_{1}-a_1F_0.
\end{aligned}
$$

Each $F_k$ may also be calculated as the determinant of the lower Hessenberg matrix $\bm{H}_k$ \cite{Sahin2018}, with 

\begin{equation*}
    \bm{H}_{k}=\left[\begin{array}{ccccc}
a_{n-1} & -1 & 0 & \cdots & 0 \\ 
a_{n-2} & a_{n-1} & -1 & \cdots & 0 \\
\vdots & \vdots & \vdots & \ddots & \vdots \\
a_{n-k+1} & a_{n-k+2} & a_{n-k+3} & \cdots & -1 \\
a_{n-k} & a_{n-k+1} & a_{k-2} & \cdots & a_{n-1}
\end{array}\right].
\label{eqn:Hmatrix}
\end{equation*}

Although interesting in its own right, and apparently new to the control community, the explicit computation of $\bm{P}$ and its inverse $\bm{P}^{-1}$, the transformation matrix allowing to go from observability to observer realizations is heavy with computations and not very friendly to the novice. This is why we introduce a \textit{mechanical and graphical process} to do the same by means of a sequence of elementary block manipulations.

\section{Transforming the observability realization into the observer realization via elementary block manipulations}

We begin with the observability realization~(\ref{observability_1}), which we represent graphically in Fig.~\ref{observability}.

\begin{table}[]
\centering
\begin{tabularx}{\columnwidth}{|X|}
\hline
\includegraphics[width=\linewidth]{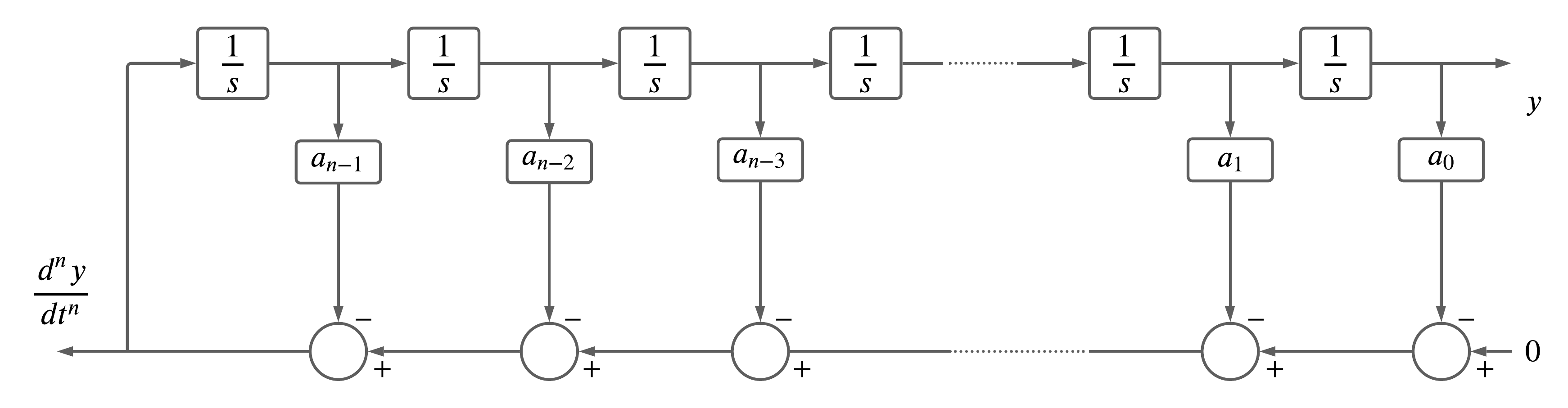}
\captionof{figure}{Block-diagram representation of the observability realization of the system~(\ref{basic_system}).}\label{observability} \\ \hline
\begin{equation}
y^{(n)}+y^{(n-1)}a_{n-1}+\dots+a_0y=0.
\label{system}
\end{equation} \\ \hline
\end{tabularx}
\caption{Block-diagram representation of the observability realization of the system~(\ref{basic_system}) with the output $y$ ODE.}
\label{tab:t1}
\end{table}


The state-space transformation to the observer realization is obtained by moving integrators around the block diagram.\ 
With this proposed method, the system is transformed into several observable and non-observable realizations in the described steps.

The first move is shown in Fig.~\ref{move_2}, whereby the upper left integrator is moved down, replacing $\dif^n y/\dif t^n$ by $\dif^{n-1}y/\dif t^{n-1}$ on the way. 


\begin{table}[]
\centering
\begin{tabularx}{\columnwidth}{|X|}
\hline
\includegraphics[width=\linewidth]{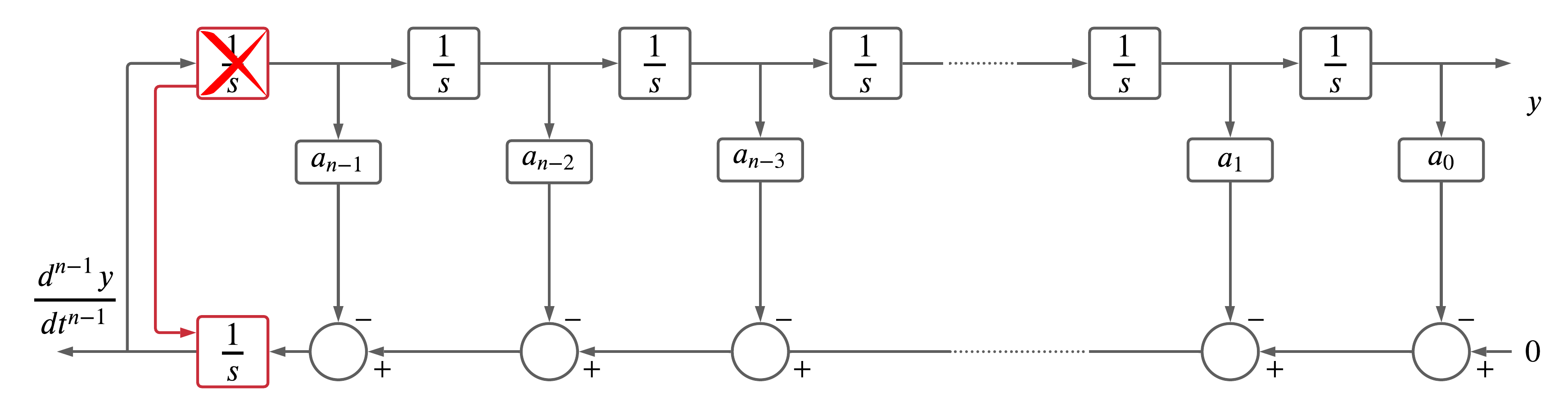}
\captionof{figure}{First move: The upper left integrator is moved down.}\label{move_2} \\ \hline
\begin{equation*}
y^{(n-1)}=\textcolor{red}{\int}y^{(n-1)}(-a_{n-1})\textcolor{red}{}+\textcolor{red}{\int}\left[\sum_{i=1}^{n-1}(-a_{i-1})\int^{(n-i)}y^{(n-1)}\right]\textcolor{red}{}
\end{equation*}
\begin{equation*}
y^{(n-1)}=y^{(n-2)}(-a_{n-1})+\textcolor{red}{\int}\left[\sum_{i=1}^{n-1}(-a_{i-1})y^{(i-1)}\right]\textcolor{red}{}
\end{equation*}
\begin{equation*}
y^{(n-1)}=y^{(n-2)}(-a_{n-1})+\sum_{i=1}^{n-1}(-a_{i-1})y^{(i-2)}
\end{equation*}
\begin{equation*}
y^{(n-1)}=\sum_{i=1}^n(-a_{i-1})y^{(i-2)}
\end{equation*}
So, we obtain an integrated version of ODE \ref{system} in TABLE \ref{tab:t1}:
\begin{equation*}
y^{(n-1)}+a_{n-1}y^{(n-2)}+\dots+a_1y+a_0\int y=0.
\end{equation*}
 \\ \hline
\end{tabularx}
\caption{First move: The upper left integrator is moved down. We also observe the associated ODE.}
\label{tab:t2}
\end{table}


The movement of the same integrator behind the sum splits it due to linearity. Also, the following upper left integrator is moved down, where a differentiator is created next to the gain $a_{n-1}$. We observe these two steps in
 Fig.~\ref{move_3}.

\begin{table}[]
\centering
\begin{tabularx}{\columnwidth}{|X|}
\hline
\includegraphics[width=\linewidth]{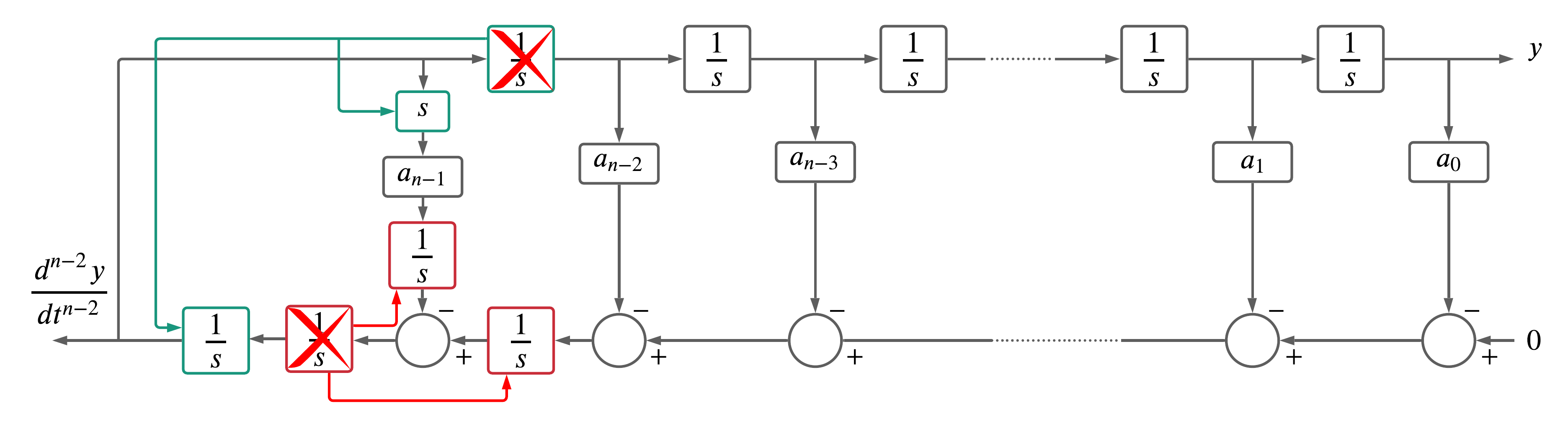}
\captionof{figure}{Second and third move: The lower left integrator is moved before the sum, where it is split due to linearity (second move). The next upper left integrator is moved down, where a differentiator is created next to the gain $a_{n-1}$ (third move).}\label{move_3} \\ \hline
\begin{multline*}
y^{(n-2)}=\textcolor{green}{\int}\textcolor{red}{\int}\textcolor{green}{\frac{\dif}{\dif t}}\left((-a_{n-1})y^{(n-2)}\right)\textcolor{red}{}\textcolor{green}{}\\
+{\color{green}\int}{\color{red}\int}\left((-a_{n-2})y^{(n-2)}\right){\color{red}}{\color{green}}+{\color{green}\int}{\color{red}\int}\left(\sum_{i=3}^{n}\int^{(i-2)}(-a_{i-1})y^{(n-1)}\right)\textcolor{red}{}\textcolor{green}{}
\end{multline*}
\begin{equation*}
y^{(n-2)}=(-a_{n-1})y^{(n-3)}+(-a_{n-2})y^{(n-4)}+\sum_{i=3}^n(-a_{n-i})y^{(n-2-i)}
\end{equation*}
\begin{equation*}
y^{(n-2)}=\sum_{i=1}^n(-a_{n-i})y^{(n-2-i)}.
\end{equation*}
So, we obtain a two-times integrated version of ODE \ref{system}:
\begin{equation*}
y^{(n-2)}+a_{n-1}y^{(n-3)}+\dots+a_1\int y+a_0\int\int y =0.
\end{equation*}
 \\ \hline
\end{tabularx}
\caption{Second and third move: The lower left integrator is moved before the sum, where it is split due to linearity (second move). The next upper left integrator is moved down, where a differentiator is created next to the gain $a_{n-1}$ (third move). We also observe the associated ODE.}
\label{tab:t3}
\end{table}


Since blocks $s$ and $1/s$ next to gain $a_{n-1}$ represent inverse linear operators, we cancel them and we reach the realization in Fig.~\ref{move_4}.

\begin{table}[]
\centering
\begin{tabularx}{\columnwidth}{|X|}
\hline
\includegraphics[width=\linewidth]{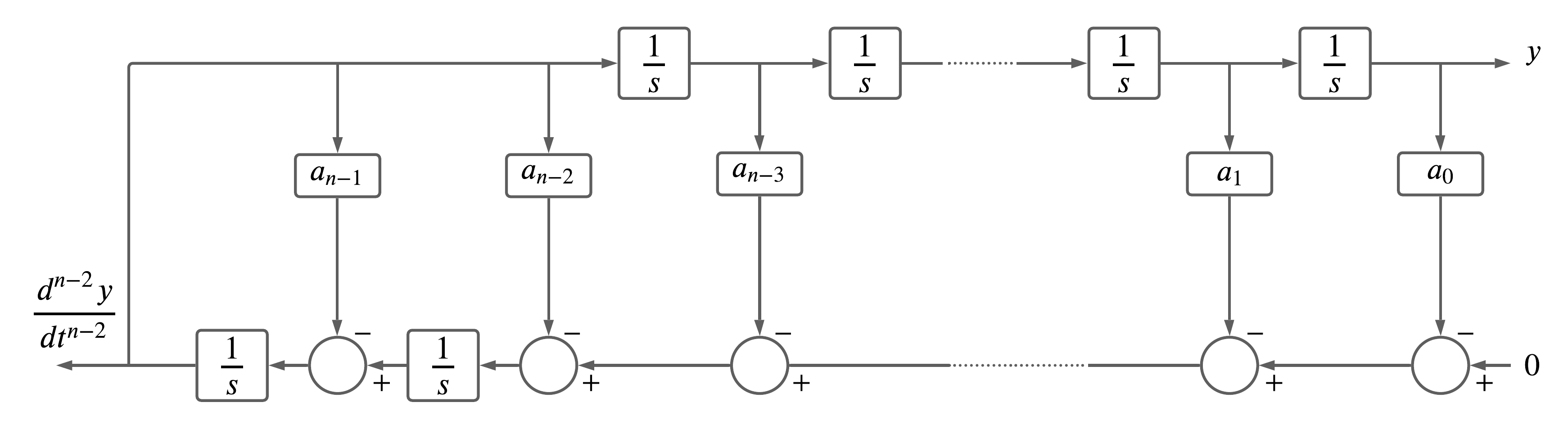}
\captionof{figure}{Realization after canceling integrator and differentiator next to gain $a_{n-1}$.}\label{move_4} \\ \hline
\begin{multline*}
y^{(n-2)}=\int(-a_{n-1})y^{(n-2)}+\int\int(-a_{n-2})y^{(n-2)}\\
+\int\int\left(\sum_{i=3}^n\left(-a_{n-i}\int^{(i-2)}y^{(n-2)}\right)\right)
\end{multline*}
\begin{equation*}
y^{(n-2)}=\sum_{i=1}^n\int^{(i)}(-a_{n-i})y^{(n-1)}.
\end{equation*}
So, we obtain a two-times integrated version of ODE \ref{system}:
\begin{equation*}
y^{(n-2)}+a_{n-1}y^{(n-3)}+\dots+a_1\int y+a_0\int\int y =0.
\end{equation*}
 \\ \hline
\end{tabularx}
\caption{Realization and ODE for the output after canceling the integrator and differentiator next to gain $a_{n-1}$.}
\label{tab:t4}
\end{table}

We keep repeating the preciously described process a total of $m$ times, where $m$ integrator are down, and $n-m$ are up. Also, there are no integrators nor differentiators next to gains $a_{n-1},a_{n-2},\dots,a_{n-(m-1)}$ due to cancelation. In Fig. \ref{move_7} we observe the associated realization.

\begin{table}[]
\centering
\begin{tabularx}{\columnwidth}{|X|}
\hline
\includegraphics[width=1\textwidth]{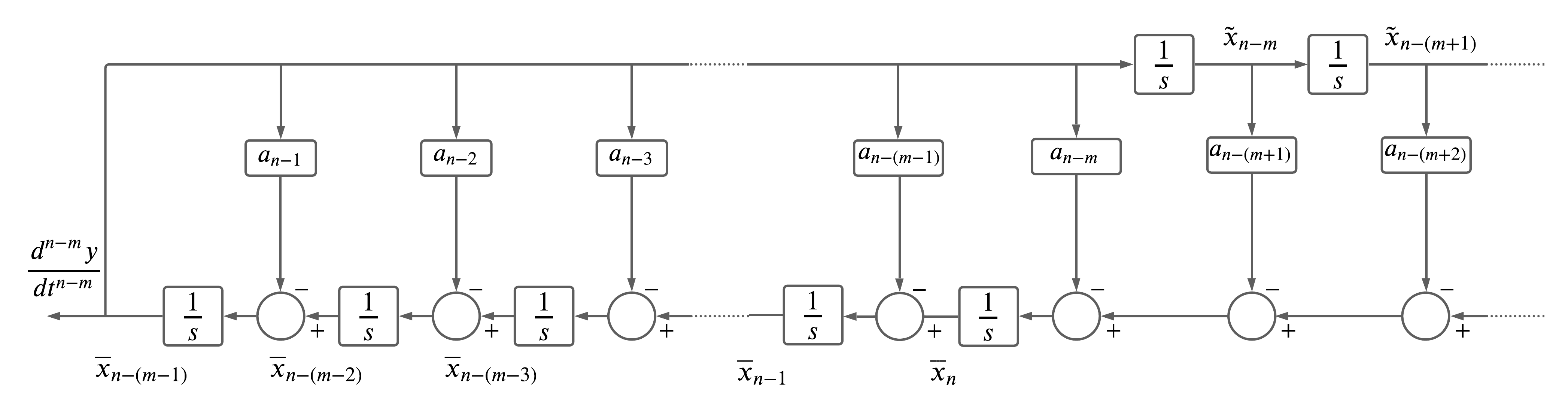}
\captionof{figure}{We observe the realization after moving down and aligning the first $m$ integrators.}\label{move_7} \\ \hline
\begin{equation*}
y^{(n-m)}=\sum_{i=1}^m(-a_{n-i})\int^{(i)}y^{(n-m)}+\int^{(m)}\sum_{i=1}^{n-m}\int^{(i)}(-a_{n-m-i})y^{(n-m)}
\end{equation*}
\begin{equation*}
y^{(n-m)}=\sum_{i=1}^n\int^{(i)}(-a_{n-i})y^{(n-m)}.
\end{equation*}
So, we obtain a $m$-times integrated version of ODE \ref{system}:
\begin{equation*}
y^{(n-m)}+a_{n-1}y^{(n-m-1)}+\dots+a_1\int^{(m-1)}y+a_0\int^{(m)} y=0.
\end{equation*}
 \\ \hline
\end{tabularx}
\caption{Realization, state-space matrix, and output ODE for the system after moving down $m$ integrators.}
\label{tab:t7}
\end{table}


It is now possible to repeat the foregoing steps {shown in the Fig.~\ref{move_2}, until Fig.~\ref{move_7}} and progressively move all remaining upper integrators until none remains, resulting in the observer realization shown in Fig.~\ref{observer_fig}.


\begin{table}[]
\centering
\begin{tabularx}{\columnwidth}{|X|}
\hline
\includegraphics[width=1\textwidth]{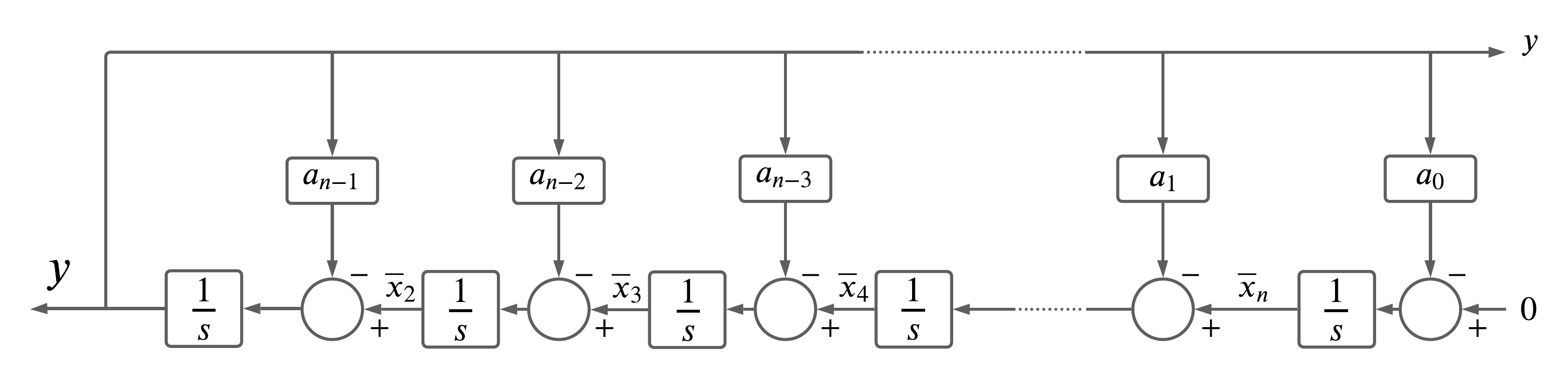}
\captionof{figure}{Observer  realization.}\label{observer_fig} \\ \hline
\begin{equation}
y+a_{n-1}\int y+a_{n-2}\int\int y+\dots+a_0\int^{(n)}y=0.
\label{fig:final_ode}
\end{equation}
 \\ \hline
\end{tabularx}
\caption{Observer realization.}
\label{tab:observer_fig}
\end{table}

In the next section we analyze the block manipulations as a succession of elementary state-space transformations.

Before leaving this section, we would like to thank the reviewer for suggesting to draw the equivalence between the suggested manipulations and the progressive integration of the differential equation \eqref{system} to yield the integral equation \eqref{fig:final_ode}. We utilize $\int^{(n)}y$ to denote the process of integrating $n$ times the function $y$ with respect to time.


\section*{State-space interpretation of elementary block manipulations}

{ We call $\bm{A}_m$ to the state-matrix associated to the $m$-th observable realization during the block manipulation. We call \textit{block manipulation} to the process of going from Table \ref{tab:t7} for step $m$ to Table \ref{tab:t7} for step $m+1$. Essentially, we start with $\bm{A}_{\rm{observability}}$, and we move through the sequence $\bm{A}_1,\dots,\bm{A}_{n-1}$, until we reach $\bm{A}_{\rm{observer}}$. $\bm{A}_m$ is given by

{\small
\begin{equation*}
\bm{A}_m= 
\begin{bmatrix}
\textbf{0} & 1 & 0 & \dots & \dots & \dots & \dots & \dots & \dots & \dots & \dots & 0 & 0 & 0 \\
0 & \textbf{0} & 1 & 0 & \dots & \dots & \dots & \dots & \dots & \dots & \dots & 0 & 0 & 0 \\
\dots & \dots & \dots & \dots & \dots & \dots & \dots & \dots & \dots & \dots & \dots & \dots & \dots & \dots \\
0 & 0 & 0 & \dots & \textbf{0} & 1 & 0 & \dots & \dots & \dots & \dots & 0 & 0 & 0 \\
0 & 0 & 0 & \dots & \dots & \textbf{0} & 1 & 0 & \dots & \dots & \dots & 0 & 0 & 0 \\
0 & 0 & 0 & \dots & \dots & \dots & \bm{-a_{n-1}} & 1 & 0 & \dots & \dots & 0 & 0 & 0 \\
0 & 0 & 0 & \dots & \dots & \dots & -a_{n-2} & \textbf{0} & 1 & 0 & \dots & 0 & 0 & 0 \\
\dots & \dots & \dots & \dots & \dots & \dots & \dots & \dots & \dots & \dots & \dots & \dots & \dots & \dots \\
0 & 0 & 0 & \dots & \dots & \dots & -a_{n-(m+1)} & \dots & \dots & \dots & \dots & 0 & \textbf{0} & 1 \\
-a_{0} & -a_{1} & -a_{2} & \dots & \dots & -a_{n-(m-1)} & -a_{n-m} & 0 & \dots & \dots & \dots & \dots & \dots & \textbf{0} \\
\end{bmatrix}.
\end{equation*} }
Since all observable realizations have the same dynamics, i.e., they are equivalent. There exists a change-of-basis matrix $\bm{P}_m$ such that $\bm{A}_m=\bm{P}_m\bm{A}_{m-1}\bm{P}^{-1}_m$ leading to

\begin{equation}
\bm{P}_{n-1}... \bm{P}_{1} \bm{A}_{\rm{observability}}\bm{P}^{-1}_1... \bm{P}^{-1}_{n-1}=\bm{A}_{\rm{observer}}.
    \label{eqn:transitions}
\end{equation}
The change-of-basis matrix $\bm{P}_i$ is given by
\begin{equation*}
    \bm{P}_i=\begin{bmatrix}
1 & 0 & 0  & \dots & \dots & \dots & \dots & 0 & 0 \\
0 &\ddots & \ddots & & & & & & 0 \\
\vdots & \ddots & 1 & \ddots &  &  &  &  & \vdots \\
\vdots &  & 0 & 1 & \ddots & &  & & \vdots \\
\vdots & & \vdots & -a_{n-1} & 1 & \ddots &  & & \vdots \\
\vdots & & \vdots & -a_{n-2} & 0 & \ddots & \ddots & & \vdots \\
\vdots & & \vdots & \vdots & \vdots & \ddots & \ddots & \ddots & 0 \\
\vdots & & \vdots & -a_{n-(i-1)} & \vdots & & \ddots & \ddots & 0 \\
 0 & \dots & 0 & -a_{n-i} & 0 & \dots & \dots & 0 & 1 \\
\end{bmatrix},
\end{equation*}
and its inverse $\bm{P}_i^{-1}$ by
\begin{equation*}
    P_i^{-1}=\begin{bmatrix}
1 & 0 & 0  & \dots & \dots & \dots & \dots & 0 & 0 \\
0 &\ddots & \ddots & & & & & & 0 \\
\vdots & \ddots & 1 & \ddots &  &  &  &  & \vdots \\
\vdots &  & 0 & 1 & \ddots & &  & & \vdots \\
\vdots & & \vdots & a_{n-1} & 1 & \ddots &  & & \vdots \\
\vdots & & \vdots & a_{n-2} & 0 & \ddots & \ddots & & \vdots \\
\vdots & & \vdots & \vdots & \vdots & \ddots & \ddots & \ddots & 0 \\
\vdots & & \vdots & a_{n-(i-1)} & \vdots & & \ddots & \ddots & 0 \\
 0 & \dots & 0 & a_{n-i} & 0 & \dots & \dots & 0 & 1 \\
\end{bmatrix},
\end{equation*}
}%
where the coefficients $a_{n-1},\ldots,a_{n-i}$ are placed on the column $n-i$. Naturally, \eqref{eqn:transitions} leads to the matrices $\bm{P}$ and $\bm{P}^{-1}$ in \eqref{eqn:P} and \eqref{eqn:Pinv} respectively, as

\begin{align*}
    \prod_{i=1}^{n-1}\bm{P}_{n-i}&=\bm{P}\\ \prod_{i=1}^{n-1}\bm{P}^{-1}_{i}&=\bm{P}^{-1}.
\end{align*}

Furthermore, given that $\bm{C}_{m}=\bm{C}_{m-1} \bm{P}^{-1}_{m}$, it is possible to see that $\bm{C}_{\rm{observability}}=\bm{C}_m=\bm{C}_{\rm{observer}}$.

Despite the intrinsic value of these elementary state-space transformations, the authors hope that the reader will agree with them when they say the equivalent block-diagram manipulations are far more accessible to the senior undergraduate or a first-year graduate student interested in learning about linear systems and navigating their canonical state-space representations.

\section*{Conclusion}
A visual and easy-to-understand transformation from a LTI system in observability form to observer form is introduced. This transformation solely relies on elementary block diagram manipulations, thus making it accessible to students during their early exposure to state-space systems.
The third author author has made this transformation a part of his elementary linear systems theory curriculum. It also applies to controllability and controller realizations via duality.

\processdelayedfloats 

\sidebars 

\clearpage
\newpage








\newpage
\processdelayedfloats 
\clearpage






\newpage
\section{Author Biography}

Eder Baron-Prada received a B.S. degree in Electrical Engineering and the M.S. degree in Industrial Automation from the Universidad Nacional de Colombia, Bogot\'a, Colombia, in 2016 and 2019 respectively. He is currently a Ph.D. student at King Abdullah University of Science and Technology (KAUST) in Electrical and Computer Engineering. His research interests include game theory, optimization, and control of multi-agent systems applied to modern power systems.

Renzo Caballero received an Electrical Engineering degree from Universidad de la Rep\'{u}blica, Uruguay, in 2016, and an M.S. in Applied Mathematics and Computer Science degree from King Abdullah University of Science and Technology (KAUST), KSA, in 2019. He is currently pursuing his PhD at KAUST. His current research focuses on self-replicating robots, but he also is interested in microelectronic design, synthesis of 2D materials, and stochastic optimal control.

Eric Feron is a professor of Electrical, Computer, and Mechanical Engineering. He is the director of the Robotics, Intelligent Systems, and Control (RISC) Laboratory. He recently joined the KAUST CEMSE Division from the Georgia Institute of Technology. Prior to his time at Georgia Tech, he was an active faculty member in MIT's Aeronautics and Astronautics department from 1993 until 2005. Feron's career in academia began in Paris, France, where he obtained the B.S. and M.S. from \'Ecole Polytechnique and \'Ecole Normale Sup\'erieure, respectively. He later completed the Ph.D. in aerospace engineering at Stanford University, USA.

\newpage

\bibliographystyle{IEEEtran}
\bibliography{bibliography.bib}

\end{document}